\begin{document}

\title{Partial Degree Bounded Edge Packing Problem with Arbitrary Bounds}   
\author{Pawan Aurora \and Sumit Singh \and Shashank K Mehta}         
\institute{Indian Institute of Technology, Kanpur - 208016, India\\
\email{paurora@iitk.ac.in,ssumit@iitk.ac.in,skmehta@cse.iitk.ac.in}}
\tocauthor{Pawan Aurora (Indian Institute of Technology, Kanpur) }

\maketitle

\begin{abstract}
We study the Partial Degree Bounded Edge Packing (PDBEP) problem introduced 
in \cite{Zhang} by Zhang. They have shown that this problem is NP-Hard even for uniform
degree constraint. They also presented approximation algorithms for the case when 
all the vertices have degree constraint of $1$ and $2$ with approximation 
ratio of $2$ and $32/11$ respectively. In this work we study general degree constraint
case (arbitrary degree constraint for each vertex) and present two 
combinatorial approximation algorithms with approximation factors $4$ and
$2$. We also study integer program based solution and present an iterative
rounding algorithm with approximation factor $3/(1-\epsilon)^2$ for any positive
$\epsilon$. Next we study the same problem with weighted edges. In this case
we present an $O(\log n)$ approximation algorithm. Zhang \cite{Zhang} has given an
exact $O(n^2)$ complexity algorithm for trees in case of uniform degree
constraint. We improve their result by giving $O(n\cdot \log n)$ complexity 
exact algorithm for trees with general degree constraint.
\end{abstract}

{\bf Keywords}: Edge-Packing Problems, Iterative Rounding, Lagrangian Relaxation.

\section{Introduction}
The {\em partial degree bounded edge packing problem} (PDBEP) is described
as follows:
Given a graph $G=(V,E)$ and degree-bound function $c:V\rightarrow \mathbb{N}$,
compute a maximum cardinality set $E'\subseteq E$ which satisfies the {\em
degree condition}:
$(d'_u \leq c_u) \vee (d'_v \leq c_v)$ for each $e=(u,v)\in E'$.
Here $d'_x$ denotes the degree of vertex $x$ in the graph $G'=(V,E')$.
Without loss of generality, we will assume that $c_v \leq d_v$ for all 
$v\in V$ where $d_v$ denotes the degree of $v$ in $G$.

In the weighted version of the problem edges are assigned non-negative weights 
and we want to compute a set of edges $E'$ with maximum cumulative weight 
subject to the degree condition described above.

In \cite{Zhang}, the PDBEP problem was motivated by an application in binary 
string representation. It was shown there that the maximum expressible 
independent subset (MEIS) problem on $2$-regular set can be reduced to PDBEP 
problem with uniform constraint $c=2$. The PDBEP problem finds another 
interesting application in resource allocation. Given $n$ types of resources 
and $m$ jobs, each job needs two types of resources. A job $j$, which
requires resources $u$ and $v$, can be accomplished if 
$u$ is not shared by more than $c_u$ jobs or $v$ is not shared by more than $c_v$ jobs.
Interpreting the resources as the vertices of the input 
graph and the jobs as edges, the PDBEP problem is to compute the maximum number 
of jobs that can be accomplished.

\subsection{Related Work}
The decision problem of edge packing when there is a uniform  degree constraint 
of $1$ is a parametric dual of the Dominating Set (DS) problem. The 
parametric dual means that for graph $G=(V, E)$, a $k$ sized dominating set 
implies a $|V| - k$ sized edge packing, and vice versa. The parametric dual 
of DS was studied in \cite{Niem}. Further, the dual was well studied under 
the framework of parameterized complexity by Dehne, Fellows, Fernau, Prieto 
and Rosamond in \cite{DFM}.

Recently Peng Zhang \cite{Zhang} showed that the PDBEP problem with uniform 
degree constraint ($c_v=k$ for all $v$) is NP-hard even for $k=1$ for general graphs. 
They gave approximation algorithms for the PDBEP problem under uniform degree 
constraints of $k=1$ and $k=2$ with approximation factors $2$ and $32/11$ 
respectively. They showed that PDBEP on trees with uniform degree constraint
 can be solved in $O(n^2)$ time.

\subsection{Our Contribution}
We propose three different approximation algorithms for the problem with
general degree constraints (i.e., for arbitrary function $c$). Two of these
algorithms are combinatorial in nature and their approximation ratios are
$4$ and $2$. We then show that the relaxation of the natural integer program
 for this problem has a large integrality gap. Then we propose an 
``approximate'' integer program which is a Lagrangian-like relaxation of the 
original IP, and show that any $\alpha$ approximation of this IP is a
$2\alpha/(1-\epsilon)$ approximation of the PDBEP problem for any $\epsilon>0$.
 We then present a $1.5/(1-\epsilon)$ approximation iterative rounding 
\cite{Jain} algorithm for the new integer program. Although this only leads to a
$3/(1-\epsilon)^2$ factor approximation, we think that this Lagrangian like
relaxation is an important
contribution and hope that this method can be applied to some other 
problems to get significantly better results especially in cases where the 
natural IP has a large integrality gap.

The results detailed above are significantly improved results over the 
$2$ and $32/11$ approximations in \cite{Zhang} which are applicable to 
constant function cases $c=1$ and $c=2$ respectively.

Next we consider the PDBEP problem  with general degree constraint for 
edge-weighted graphs. In this case we present a combinatorial approximation 
algorithm with approximation factor of $2+2\log n$.

Finally we present an exact algorithm for unweighted trees with general
degree constraint function. The time complexity of this algorithm is
$O(n\log n)$. This is an improvement over the $O(n^2)$ algorithm
in \cite{Zhang} which is applicable to only a constant degree constraint 
function.

\section{Approximation Algorithms for the unweighted case}

The optimum solution of a PDBEP problem can be bounded as follows.

\begin{lemma} \label{lem1}
Let $G=(V,E)$ be a graph with degree-bound function $c:V\rightarrow \mathbb{N}$.
Then the optimal solution of PDBEP can have at most $\sum_{v\in V} c_v$ edges. 
\end{lemma}

\begin{proof} Let $E'\subset E$ be a solution of PDBEP. Let $U=\{v\in V| d'_v \leq c_v\}$.
Then from the degree condition we see that $U$ is a vertex cover in the graph $(V,E')$.
Hence $|E'|\leq \sum_{u\in U}c_u \leq \sum_{v\in V}c_v$. $\Box$
\end{proof}

\subsection{Edge Addition based Algorithm}

Consider any maximal solution $Y\subseteq E$, i.e., $Y\cup \{e\}$ is not a 
solution for $e\in E\setminus Y$. Let $d_Y(x)$ denote the degree of a vertex
$x$ in the graph $(V,Y)$. Partition the vertex set into sets: 
$A = \{v| d_Y(v) < c_v\}$, $B = \{v |d_Y(v)=c_v\}$, and $C=\{v| d_Y(v) > c_v\}$.
Observe that every edge of the set $E\setminus Y$ which is incident on a vertex in
$A$, has its other vertex in $B$. Hence for any $a_1,a_2\in A$ the $E\setminus Y$ edges 
incident on $a_1$ are all distinct from those incident on $a_2$.
 Construct another edge set $Z$ containing
any $c_v-d_Y(v)$ edges incident on $v$ for each $v\in A$. Observe that $Z$ also satisfies degree
constraint.
Output the larger of $Y$ and $Z$. See Algorithm \ref{edge-addition-algorithm}. 
We have the following result about the correctness.

\begin{lemma}\label{lem2}
The Algorithm \ref{edge-addition-algorithm} outputs a set which satisfies the
degree constraint.
\end{lemma}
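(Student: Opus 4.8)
The plan is to show that the algorithm's output, which is either $Y$ or $Z$, satisfies the degree condition in each case. For $Y$ this is immediate: $Y$ is a maximal solution of PDBEP, hence by definition it already satisfies the degree condition. So the only real content is verifying that $Z$ satisfies the degree condition, i.e., that for every edge $e=(u,v)\in Z$ we have $d'_u\le c_u$ or $d'_v\le c_v$ where $d'$ is degree in $(V,Z)$.

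The key observation I would use is the structure already noted in the text: every edge of $E\setminus Y$ incident on a vertex $a\in A$ has its other endpoint in $B$. Since $Z$ is built by selecting, for each $v\in A$, exactly $c_v-d_Y(v)$ edges incident on $v$, and these selected edges lie in $E\setminus Y$ (we need $d_Y(v)<c_v$ so such edges exist, and they are the "new" edges), each such edge has one endpoint in $A$ and the other in $B$. Moreover, for $a_1\neq a_2$ in $A$, the $E\setminus Y$ edges on $a_1$ and on $a_2$ are disjoint — this is exactly the remark in the text, which holds because a common edge would be incident on two $A$-vertices, contradicting that the other endpoint must be in $B$. Hence $Z$ is a disjoint union, over $a\in A$, of at most $c_a-d_Y(a)\le c_a$ edges, all with the other endpoint in $B$.

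From this I would conclude: in the graph $(V,Z)$, the degree of any $a\in A$ is exactly $c_a-d_Y(a)\le c_a$, so $d'_a\le c_a$. Now take any edge $e=(a,b)\in Z$ with $a\in A$, $b\in B$. Its $A$-endpoint $a$ satisfies $d'_a\le c_a$, so the degree condition $(d'_a\le c_a)\vee(d'_b\le c_b)$ holds for $e$. Since every edge of $Z$ is of this form, $Z$ satisfies the degree condition, and therefore the algorithm's output — whichever of $Y$, $Z$ is larger — satisfies the degree constraint. $\Box$

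There is no serious obstacle here; the one point to be careful about is confirming that the edges chosen into $Z$ really do come from $E\setminus Y$ and not from $Y$ itself (otherwise the disjointness argument across distinct $A$-vertices could fail). The cleanest way is to observe that since $d_Y(a)<c_a\le d_a$, vertex $a$ has at least one incident edge outside $Y$, and in fact has at least $c_a-d_Y(a)$ such edges, so $Z$ can be chosen entirely within $E\setminus Y$; I would state this explicitly to make the disjointness step airtight.
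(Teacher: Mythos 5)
Your proof is correct and follows essentially the same route as the paper, which only sketches this argument in the text preceding the lemma: every $E\setminus Y$ edge at an $A$-vertex ends in $B$, hence the selections for distinct $A$-vertices are disjoint, so each $a\in A$ has degree $c_a-d_Y(a)\le c_a$ in $(V,Z)$ and every $Z$-edge satisfies the degree condition at its $A$-endpoint. Your explicit check that each $a\in A$ has at least $c_a-d_Y(a)$ incident edges in $E\setminus Y$ (using $c_a\le d_a$) is a detail the paper leaves implicit, and it is right.
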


Consider the set $Y\cup Z$. In this set the degree of each vertex
is not less than its degree-bound. Hence the cardinality of the output of the
algorithm is at least $\sum_v c_v/4$. From Lemma \ref{lem1} the approximation
ratio is bounded by $4$.

\begin{theorem}\label{thm1} Algorithm \ref{edge-addition-algorithm} has approximation 
factor $4$. 
\end{theorem}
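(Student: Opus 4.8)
The plan is to establish the approximation factor $4$ by combining the upper bound from Lemma~\ref{lem1} with a lower bound on the size of the algorithm's output. First I would argue that $Y \cup Z$ covers every vertex's degree-bound, i.e., for each $v \in V$ the degree of $v$ in the graph $(V, Y \cup Z)$ is at least $c_v$. This follows because vertices in $B$ and $C$ already have $d_Y(v) \geq c_v$, and for each $v \in A$ the set $Z$ contributes exactly the missing $c_v - d_Y(v)$ edges at $v$, so its degree in $Y \cup Z$ is at least $c_v$ (possibly more, if $v$ also receives $Z$-edges from a neighbor). Consequently $2|Y \cup Z| = \sum_{v} d_{Y \cup Z}(v) \geq \sum_{v} c_v$, so $|Y \cup Z| \geq \tfrac{1}{2}\sum_v c_v$.

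Next I would relate $|Y \cup Z|$ to $\max(|Y|, |Z|)$. Since $|Y \cup Z| \leq |Y| + |Z| \leq 2\max(|Y|,|Z|)$, the larger of the two sets has size at least $\tfrac{1}{2}|Y \cup Z| \geq \tfrac{1}{4}\sum_v c_v$. The algorithm outputs exactly this larger set, which by Lemma~\ref{lem2} is feasible, and by Lemma~\ref{lem1} the optimum is at most $\sum_v c_v$; dividing gives the approximation ratio $4$.

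The step that needs the most care is the first one: verifying that $Z$ is well-defined and that $Y \cup Z$ genuinely covers all degree-bounds. For $Z$ to be constructible we need, for each $v \in A$, at least $c_v - d_Y(v)$ edges of $E \setminus Y$ incident on $v$; since $c_v \leq d_v$ by the standing assumption and $v$ has $d_Y(v)$ edges of $Y$ on it, there are $d_v - d_Y(v) \geq c_v - d_Y(v)$ candidate edges available, so the choice is possible. The observation already recorded in the text — that $E \setminus Y$ edges incident on distinct vertices of $A$ are distinct (since the other endpoint of such an edge must lie in $B$, not $A$) — is not strictly needed for the counting bound above, but it reassures us that $Z$ is a legitimate edge set and, combined with Lemma~\ref{lem2}, that $Z$ itself satisfies the degree constraint. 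I would then simply chain the inequalities $\mathrm{OPT} \leq \sum_v c_v \leq 4 |Y \cup Z| \cdot \tfrac{1}{2} \cdot 2 = 4\max(|Y|,|Z|) = 4 \cdot \mathrm{ALG}$ to conclude.
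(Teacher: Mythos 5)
Your proposal is correct and is essentially the paper's own argument: in $Y\cup Z$ every vertex has degree at least $c_v$, so $|Y\cup Z|\geq \tfrac12\sum_v c_v$, hence the output $\max(|Y|,|Z|)\geq \tfrac14\sum_v c_v$, and Lemma~\ref{lem1} gives the factor $4$; your added checks (that $Z$ is constructible since $c_v\leq d_v$, and feasibility via Lemma~\ref{lem2}) only make explicit what the paper leaves implicit. The final displayed chain is written sloppily (it should read $\mathrm{OPT}\leq \sum_v c_v \leq 2|Y\cup Z| \leq 2(|Y|+|Z|) \leq 4\max(|Y|,|Z|)$), but the argument you describe is exactly this and is sound.
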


\begin{algorithm}
\label{edge-addition-algorithm}
\KwData{A connected graph $G=(V,E)$ and a function $c: V \rightarrow \mathbb{N}$ 
such that $c_v\leq d(v)$ for each vertex $v$.}
\KwResult{Approximation for the largest subset of $E$ which satisfies the degree-condition.}
$Y:=\emptyset$\;
\For {$e\in E$} {
	\If {$Y\cup \{e\}$ satisfies the degree-condition}{$Y := Y\cup \{e\}$;}}
Compute $A:= \{v\in V| d_Y(v)<c_v\}$\;
$Z:=\emptyset$\;
\For {$v \in A$}{Select arbitrary $c_v-d_Y(v)$ edges incident on $v$ in $E\setminus Y$
and insert into $Z$;}
\eIf {$|Y|\geq |Z|$} {\KwRet $Y$;} {\KwRet $Z$;}
\caption{Edge Addition Based Algorithm}
\end{algorithm}

\subsection{Edge Deletion based Algorithm}
The second algorithm, Algorithm \ref{edge-deletion-algorithm}, for PDBEP is based on elimination 
of edges from the edge set. Starting with the input edge set $E$, iteratively we delete the 
edges in violation, i.e., in each iteration one edge $(u,v)$ is deleted if
the current degree of $u$ is greater than $c_u$ and that of $v$ is greater than $c_v$.
The surviving edge set
$Y$ is the result of the algorithm. Clearly $Y$ satisfies the degree condition. Also
observe that $d_Y(v)\geq c_v$ for all $v\in V$. Hence $|Y|\geq \sum_v c_v/2$.
From Lemma \ref{lem1}, $|Y|\geq OPT/2$.

\begin{theorem}\label{thm2} Algorithm \ref{edge-deletion-algorithm} has approximation ratio $2$. 
\end{theorem}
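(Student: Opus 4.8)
The plan is to show that the surviving edge set $Y$ produced by Algorithm \ref{edge-deletion-algorithm} satisfies $d_Y(v)\ge c_v$ for every vertex $v$, and then combine this with the upper bound of Lemma \ref{lem1} exactly as the paragraph preceding the theorem indicates. First I would verify that $Y$ is feasible: an edge is only removed when \emph{both} its endpoints currently have degree strictly exceeding their bounds, so at termination no edge $(u,v)$ has $d_Y(u)>c_u$ and $d_Y(v)>c_v$ simultaneously; that is precisely the degree condition. The key structural claim is the invariant $d_Y(v)\ge c_v$ throughout the deletion process. I would prove this by induction on the number of deletions: initially $d_E(v)=d_v\ge c_v$ by the standing assumption $c_v\le d_v$; and whenever an edge incident on $v$ is deleted, the deletion rule guarantees that at that moment $v$'s degree was strictly greater than $c_v$, i.e.\ at least $c_v+1$, so after the deletion it is still at least $c_v$. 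Hence the invariant is preserved, and in particular holds for the final set $Y$.

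With the invariant in hand, I would sum over vertices: $2|Y| = \sum_{v\in V} d_Y(v) \ge \sum_{v\in V} c_v$, so $|Y|\ge \tfrac12\sum_{v\in V}c_v$. By Lemma \ref{lem1}, $OPT \le \sum_{v\in V} c_v \le 2|Y|$, whence $|Y|\ge OPT/2$. Since $Y$ is also feasible, the algorithm returns a feasible solution of size at least half the optimum, giving approximation ratio $2$.

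The only genuinely delicate point — and the one I would state carefully rather than wave at — is the induction step for the invariant: one must be sure that the deletion rule compares the \emph{current} degree (after all prior deletions) to $c_v$, so that "degree $>c_v$" really does mean "degree $\ge c_v+1$" at the instant of removal. Everything else is routine. If the algorithm terminates (which it does, since $|E|$ strictly decreases and is bounded below by $0$, and the loop only acts on violating edges), no further argument is needed. I would also remark in passing that the bound $|Y|\ge \sum_v c_v/2$ can be loose, but that is irrelevant for establishing the ratio $2$.
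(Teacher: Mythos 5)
Your argument is correct and follows exactly the paper's reasoning: feasibility of $Y$ from the deletion rule, the invariant $d_Y(v)\ge c_v$ (using $c_v\le d_v$), hence $|Y|\ge \sum_v c_v/2$, combined with Lemma \ref{lem1}. You merely make explicit the induction on deletions that the paper leaves as an observation.
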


\begin{algorithm}
\caption{Edge Deletion Based Algorithm}
\label{edge-deletion-algorithm}
\KwData{A connected graph $G=(V,E)$ and a function $c: V \rightarrow \mathbb{N}$ such that $c_v$ is the degree bound for vertex $v$.}
\KwResult{Approximation for the largest subset of $E$ which satisfies the degree-condition.}
$Y := E$\;
\For{$e=(u,v) \in Y$}{
\If{$d_Y(u)>c_u$ and $d_Y(v)>c_v$ } {$Y \leftarrow Y \setminus \{e\}$;}}
\KwRet{$Y$}\;
\end{algorithm}

\subsection{LP based Algorithm}
In this section we explore a linear programming based approach to design
an approximation algorithm for PDBEP.

\subsubsection{The Integer Program}
Following is the natural IP formulation of the problem:

\begin{eqnarray*}
    \textrm{IP1: maximize }  & & \psi=\sum_{e\in E} y_e\  \\
    \textrm{subject to } & & y_e\leq x_u+x_v\; \forall e\in E\\
                    & & \sum_{e\in\delta(v)} y_e \leq c_vx_v+d_v(1-x_v)\; \forall v\in V\\
                    & & x_v\in\{0,1\}\; \forall v\in V\\
                    & & y_e\in\{0,1\}\; \forall e\in E
  \end{eqnarray*}

The solution computed by the program is $E'=\{e|y_e=1\}$.
The linear programming relaxation of the above integer program will be referred to as LP1.

\begin{lemma}\label{lem3}
The integrality gap of LP1 is $\Omega(n)$ where $n$ is the number of vertices in the graph.
\end{lemma}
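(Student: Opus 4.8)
The plan is to exhibit a family of graphs on which the LP1 optimum is $\Omega(n)$ times larger than the true PDBEP optimum. The natural candidate is a graph whose integer optimum is a fixed constant (or grows very slowly) while the fractional solution can "cheat" by setting every $x_v$ to a small fractional value $\tfrac12$ (or smaller), thereby satisfying $y_e \le x_u + x_v$ with $y_e = 1$ for \emph{all} edges simultaneously, and paying only a fractional penalty in the degree constraint $\sum_{e\in\delta(v)} y_e \le c_v x_v + d_v(1-x_v)$. The cleanest choice is to take $G = K_{1,n-1}$, the star, with $c_v = 1$ at every vertex; more robustly one can take any dense graph, e.g. $K_n$, again with uniform $c_v = 1$.

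First I would compute the integer optimum. For the star with all $c_v=1$: the center $u$ has $d_u = n-1$, so if any two edges are selected the center has $d'_u = 2 > c_u = 1$, and every leaf has degree $\le 1$ automatically; hence the degree condition forces $d'_u \le c_u = 1$ as soon as $\ge 2$ edges are chosen — wait, actually a leaf always satisfies its bound, so \emph{every} edge is fine and $E' = E$ works. That breaks this example, so instead I would use $K_n$ with $c_v = 1$: here every vertex has high degree, and by Lemma \ref{lem1} the optimum has at most $\sum_v c_v = n$ edges — that bound is too weak. Instead, observe directly: if $E'$ satisfies the degree condition with all $c_v = 1$, the set $U = \{v : d'_v \le 1\}$ is a vertex cover of $(V,E')$, and $|E'| \le \sum_{u\in U} c_u = |U|$; but each $u \in U$ covers at most $1$ edge, so $E'$ is a matching together with edges covered by degree-$1$ endpoints — in fact $|E'| \le n/2 \cdot$ something; more carefully, $E'$ is a disjoint union of stars each of which, to satisfy the condition, must be a single edge or a star whose center is the unique non-$U$ vertex. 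This still gives $|E'| = O(n)$, not $O(1)$. So $c_v = 1$ on a dense graph does not separate things enough.

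The right construction, then, is to make the degree bound large on a few vertices and the graph locally dense. I would take $G$ to be $K_{2,n-2}$ — the complete bipartite graph with a side $\{a,b\}$ of size $2$ and a side $L$ of size $n-2$ — with $c_a = c_b = 1$ and $c_\ell = 1$ for $\ell \in L$. Each $\ell\in L$ has degree $2$. The integer optimum: any selected edge $(\ell, a)$ needs $d'_\ell \le 1$ or $d'_a \le 1$; since $a$ has $n-2$ incident edges, $d'_a \le 1$ forces all but one edge at $a$ to be dropped. So the selected edges split into those "charged to $a$" ($\le 1$), those "charged to $b$" ($\le 1$), and those "charged to their leaf" (each leaf charges $\le 1$, but a leaf has only $2$ incident edges and if it keeps both then the constraint at $a$ and $b$ is stressed). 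A short case analysis gives integer optimum $O(1)$ — bounded by a constant independent of $n$. For the fractional solution, set $x_v = \tfrac12$ for all $v$ and $y_e = 1$ for all $n$ edges $e$: then $y_e = 1 \le \tfrac12 + \tfrac12 = x_u + x_v$ holds with equality, and at each leaf $\ell$ the constraint reads $2 \le c_\ell \cdot \tfrac12 + 2 \cdot \tfrac12 = \tfrac12 + 1 = \tfrac32$ — which \emph{fails}. So $x_v = \tfrac12$ is too aggressive at the leaves; I would instead set $x_\ell$ close to $1$ at the leaves (leaves are cheap — setting $x_\ell = 1$ costs nothing in the objective and makes the leaf constraint $2 \le c_\ell = 1$, still failing!).

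The genuine obstacle, which I now see is the crux, is choosing degree bounds so that the fractional relaxation truly decouples the edges: I want $c_v$ \emph{large} at the dense vertices $a, b$ in the \emph{LP's eyes} but the integrality forced to be all-or-nothing. The correct family (this is the construction I would ultimately write down): let $G$ consist of $k$ disjoint triangles sharing structure — or simplest of all, take $G = K_{1,n-1}$ but put the \emph{bound on the leaves equal to $0$}... no, bounds are in $\mathbb{N}$. Let me state the construction I'd commit to: take $n$ even, $G$ = a single vertex $w$ joined to $n-1$ others, plus — actually the standard trick for these IPs is: the relaxation can set $x_v = \tfrac12$ everywhere and $y_e = 1$ everywhere provided $\sum_{e\in\delta(v)} 1 \le c_v\tfrac12 + d_v\tfrac12 = \tfrac{c_v+d_v}{2}$, i.e. $2d_v \le c_v + d_v$, i.e. $d_v \le c_v$ — which holds with equality precisely when $c_v = d_v$. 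So on \emph{any} graph where $c_v = d_v$ for all $v$, LP1 has optimum $|E|$ via the all-$\tfrac12$, all-$1$ solution. But the problem statement assumes $c_v \le d_v$, and $c_v = d_v$ makes every edge trivially satisfiable, so the integer optimum is also $|E|$ — no gap. Hence I must pick $c_v$ \emph{strictly} below $d_v$ somewhere, break integral feasibility, yet keep $x_v = \tfrac12$ feasible. The resolution: make $c_v = d_v$ at the "leaf" side so those vertices contribute nothing to forcing, and at one special vertex make $d_v$ huge while $c_v$ small, but give that vertex a fractional $x_v$ slightly below $1$. I expect the main work of the proof — and the main obstacle — to be this balancing: constructing $G$ and $c$ so that (i) $x_v = \tfrac12 + o(1)$ at the hub and near $1$ at the spokes keeps \emph{all} $n = |E|$ constraints feasible with $y_e = 1$, giving fractional value $\Theta(n)$, while (ii) a vertex-cover / charging argument as in Lemma \ref{lem1} pins the integral optimum at $O(1)$. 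Once the family is fixed, (ii) is a routine case analysis and (i) is direct substitution; everything hinges on getting the construction right, so that is where I would spend my effort.
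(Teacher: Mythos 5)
There is a genuine gap here: you never arrive at a working construction, and the reason is a mistaken requirement you impose on yourself early on, namely that the integral optimum must be $O(1)$. An integrality gap of $\Omega(n)$ only needs the LP value to exceed the IP value by a factor $\Omega(n)$, and since Lemma \ref{lem1} caps the integral optimum at $\sum_v c_v$, it suffices to make the LP value $\Theta(n^2)$ while keeping $\sum_v c_v = O(n)$. This is exactly what the paper does with the construction you discarded as ``too weak'': $G=K_n$ with uniform $c_v=1$. The idea you are missing is that the fractional solution need not set $y_e=1$ on \emph{all} edges. With $x_v=1/2$ everywhere, the degree constraint at $v$ reads $\sum_{e\in\delta(v)}y_e\le \tfrac{c_v+d_v}{2}\approx n/2$, so one can set $y_e=1$ on roughly half of the edges at each vertex (the paper takes, for each $v_i$, the edges to the $\lfloor n/4\rfloor$ cyclically nearest labels on each side), which is feasible and gives LP value $\Theta(n^2)$ against an integer optimum of at most $n$ --- an $\Omega(n)$ gap. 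Your own correct observation that the all-$1/2$, all-$1$ assignment needs $d_v\le c_v$ should have pointed you here: relax ``all $y_e=1$'' rather than relax $c_v$ up to $d_v$ (which, as you noticed, kills the gap integrally).

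Two further concrete problems with the write-up as it stands. First, your claim that $K_{2,n-2}$ with $c\equiv 1$ has integer optimum $O(1)$ is false: taking all $n-2$ edges incident on one side vertex $a$ is feasible, since every leaf then has degree $1\le c_\ell$, so the integer optimum is $\Theta(n)$ there as well. Second, the direction you end on (spoke vertices with $c_v=d_v$, a dense hub with small $c$) cannot produce any gap at all: if $c_\ell=d_\ell$ at a spoke, every edge incident on that spoke satisfies the degree condition at the spoke in \emph{every} integral solution, so the integer optimum already contains all hub--spoke edges and matches the LP. So even setting aside the missing construction, the path you propose to spend your effort on is a dead end; the fix is the partial-$y$ fractional solution on $K_n$ described above.
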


\begin{proof}
Consider the following instance of the problem. Let $G$ be a complete graph on $n$ vertices
$\{v_0,v_1,\dots,v_{n-1}\}$ and the degree constraint be $c_v=1 \, \forall v \in V$.  
We now construct a feasible fractional solution of $LP1$ as follows. 
Let $x_v=0.5$ for all $v$ and $y_e =1$ for all $e=(v_i,v_j)$ where $j$ is in the interval $(i-\lfloor n/4 \rfloor 
(\textrm{mod}\; n), i+\lfloor n/4 \rfloor (\textrm{mod}\; n))$. The value of 
the objective function for this solution is at least $(n-1)^2/4$. On the other hand, from Lemma 
\ref{lem1}, the optimal solution for the IP1 cannot be more than $n$. Hence the 
integrality gap is $\Omega(n)$. $\Box$
\end{proof}

High integrality gap necessitates an alternative approach.

\subsubsection{Approximate Integer Program}

We propose an alternative integer program IP2 which is a form of Lagrangian relaxation
of IP1. We will show that its {\em  maximal} solutions are also solutions of IP1 and
any $\alpha$ approximation of IP2 is a $2\alpha/(1-\epsilon)$ approximation of IP1.
A maximal solution of IP2 is a solution in which 
$z_v=\max\{0,\sum_{e\in \delta(v)}y_e - c_v\}$
for all $v$ and deletion or addition of an edge does not improve the objective function value.

\begin{eqnarray*}
    \textrm{IP2: maximize } & &\phi = 2\sum_{e \in E} y_e-(1+\epsilon) \sum_{v \in V}z_v, 
\textrm{ for some } \epsilon >0  \\
    \textrm{subject to } & & \sum_{e \in \delta(v)} y_e\leq c_v + z_v\; \forall v\in V\\
                     & & z_v\in \{0,1,2,\dots\}\; \forall v\in V\\
                     & & y_e\in\{0,1\}\; \forall e\in E
  \end{eqnarray*}

Note that any subset of edges $E'$ is a feasible solution of IP2 if we choose $z_v=
\max\{0,\sum_{e\in \delta(v)}y_e -c_v\}$ for all $v$. Besides these values of $z$ will
give maximum value of the objective function. Hence $z$ values are not required to be specified
in the solutions of IP2.

\begin{lemma} \label{lem4}
Every maximal solution  
of the integer program IP2 is also a feasible solution of PDBEP. 
\label{ch4-iterative-feasible}
\end{lemma}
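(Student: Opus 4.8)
The plan is to show, by contradiction, that a maximal solution of IP2 cannot violate the degree condition of PDBEP. Suppose $E'=\{e\mid y_e=1\}$ is a maximal solution of IP2 (with $z_v=\max\{0,\sum_{e\in\delta(v)}y_e-c_v\}$) but is \emph{not} feasible for PDBEP. Then there is an edge $e=(u,v)\in E'$ with $d'_u>c_u$ and $d'_v>c_v$ simultaneously, where $d'_x=\sum_{f\in\delta(x)}y_f$ is the degree of $x$ in $(V,E')$. I would track precisely how $\phi$ changes when we delete this single edge $e$ from the solution.

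The key computation is the following. Deleting $e$ decreases $2\sum_e y_e$ by exactly $2$. It also changes the optimal $z$ values: since $d'_u>c_u$, we have $z_u=d'_u-c_u\geq 1$, and after deletion $z_u$ drops to $d'_u-1-c_u=z_u-1$ (which is still $\geq 0$); likewise $z_v$ drops by $1$. No other $z_w$ changes, because $e$ is incident only on $u$ and $v$. Hence the term $-(1+\epsilon)\sum_v z_v$ \emph{increases} by $2(1+\epsilon)$. The net change in $\phi$ is therefore $-2+2(1+\epsilon)=2\epsilon>0$, so deleting $e$ strictly improves the objective. This contradicts maximality of the solution (recall a maximal solution is one in which no single edge addition or deletion improves $\phi$). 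Therefore no such edge $e$ exists, i.e. for every $e=(u,v)\in E'$ at least one of $d'_u\leq c_u$ or $d'_v\leq c_v$ holds, which is exactly the degree condition. It also needs a one-line check that $E'\subseteq E$ is automatically a feasible solution of IP2 for the stated choice of $z$, which is immediate from the constraint $\sum_{e\in\delta(v)}y_e\leq c_v+z_v$.

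I do not expect any serious obstacle here; the argument is a short local-exchange computation. The one point to state carefully is the bookkeeping of the $z$ values: one must note that $z_u$ and $z_v$ are strictly positive before the deletion (this is where the assumption $d'_u>c_u$ and $d'_v>c_v$ is used), so that the decrease of each by $1$ is a genuine decrease rather than being clamped at $0$, and that $\epsilon>0$ is exactly what makes the net gain $2\epsilon$ strictly positive. This is also the place that motivates the coefficient $1+\epsilon$ in the objective of IP2 rather than plain $1$.
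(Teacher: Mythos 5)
Your argument is correct and is essentially identical to the paper's proof: both assume an edge $e=(u,v)\in E'$ violating the degree condition, note that then $z_u\geq 1$ and $z_v\geq 1$, and observe that deleting $e$ (and decrementing $z_u,z_v$ by one) raises the objective by $-2+2(1+\epsilon)=2\epsilon>0$, contradicting maximality. Your extra bookkeeping about why the $z$-decrements are genuine and why $\epsilon>0$ matters is a fine elaboration of the same computation.
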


\begin{proof} 
Consider any maximal solution $E'$ of IP2. In a maximal solution 
$z_v=\max\{0,\sum_{e\in \delta(v)}y_e - c_v\}$ for all $v$. Assume that it is not a feasible solution
of PDBEP. Then there must exist an edge $e=(u,v)\in E'$ such that $z_u\geq 1$ and $z_v \geq 1$.
Define an alternative solution $E''=E'\setminus \{e\}$ and decrement $z_u$ and $z_v$ by $1$ each.
Observe that the objective function of the new solution increases by $2\epsilon$. This contradicts
that $E'$ is a maximal solution. $\Box$
\end{proof}

\begin{lemma} \label{lem5} Any $\alpha$ approximate solution of IP2, which is also maximal, is a
 $2\alpha/(1-\epsilon)$ approximation of PDBEP problem.
\end{lemma}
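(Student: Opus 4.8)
The plan is to sandwich the PDBEP optimum $\mathrm{OPT}$ between two quantities derived from $\phi$, using IP2's objective on one side and the approximation guarantee on the other. First I would lower-bound $\mathrm{OPT}_{\mathrm{IP2}}$ in terms of $\mathrm{OPT}$. Take an optimal PDBEP solution $E^\ast$ with $|E^\ast|=\mathrm{OPT}$, set $y_e=1$ iff $e\in E^\ast$, and $z_v=\max\{0,\sum_{e\in\delta(v)}y_e-c_v\}$; this is feasible for IP2. The crucial observation is that $W=\{v: z_v>0\}$ equals $\{v: d^\ast_v>c_v\}$, and by the degree condition every edge of $E^\ast$ has an endpoint of degree at most its bound, so $W$ is an independent set in $(V,E^\ast)$. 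Hence each edge of $E^\ast$ is incident to at most one vertex of $W$, giving $\sum_{v\in W}d^\ast_v\le|E^\ast|=\mathrm{OPT}$; since $z_v\le d^\ast_v$ we get $\sum_v z_v\le\mathrm{OPT}$. Substituting into $\phi$ yields $\mathrm{OPT}_{\mathrm{IP2}}\ge\phi(E^\ast)\ge 2\mathrm{OPT}-(1+\epsilon)\mathrm{OPT}=(1-\epsilon)\mathrm{OPT}$.

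Next, let $E'$ be an $\alpha$-approximate maximal solution of IP2 with its canonical $z_v=\max\{0,\sum_{e\in\delta(v)}y_e-c_v\}$. By Lemma \ref{lem4}, $E'$ is a feasible PDBEP solution, so in particular $|E'|\le\mathrm{OPT}$ and only the ratio $\mathrm{OPT}/|E'|$ needs bounding. Since $z_v\ge 0$ for every $v$, we have the trivial upper bound $\phi(E')=2\sum_{e\in E'}y_e-(1+\epsilon)\sum_v z_v\le 2|E'|$.

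Finally I would chain the inequalities: the $\alpha$-approximation guarantee gives $\phi(E')\ge\mathrm{OPT}_{\mathrm{IP2}}/\alpha\ge(1-\epsilon)\mathrm{OPT}/\alpha$, while $\phi(E')\le 2|E'|$. Combining, $2|E'|\ge(1-\epsilon)\mathrm{OPT}/\alpha$, i.e. $\mathrm{OPT}/|E'|\le 2\alpha/(1-\epsilon)$, which is the claimed factor, and feasibility of $E'$ makes this a genuine approximation.

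I expect the main obstacle to be the first step, namely proving $\sum_v z_v\le\mathrm{OPT}$ on the PDBEP optimum. This is the only place where the combinatorial structure of PDBEP is invoked — the degree condition forces the set of overloaded vertices to be independent, which then yields the double-counting bound on their total degree — whereas the remaining steps are straightforward manipulations of the definition of $\phi$ together with the approximation guarantee.
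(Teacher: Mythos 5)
Your proof is correct and takes essentially the same route as the paper's: both sandwich the PDBEP optimum via $(1-\epsilon)\,\mathrm{OPT} \leq \phi(E^{\ast}) \leq \mathrm{OPT}_{\mathrm{IP2}} \leq \alpha\,\phi(E') \leq 2\alpha\,|E'|$, with feasibility of $E'$ supplied by Lemma \ref{lem4}. Your explicit independence/double-counting argument showing $\sum_{v} z_v \leq \mathrm{OPT}$ at the PDBEP optimum is exactly the justification for the paper's implicit assumption that $n_1 = |E''| - \sum_v z''_v \geq 0$, so you have actually filled in a detail the paper glosses over.
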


\begin{proof} 
Let $E'$ be an $\alpha$-approximation maximal solution of IP2 with $m_2=\sum_v z'_v$ 
and $m_1=\sum_ey'_e -m_2 = |E'|-m_2$. Let $E''$ be an optimal solution of PDBEP. Then 
$y''_e = 1$ for $e\in E''$ and $z''_v=\max\{0,\sum_{e\in \delta(v)}y''_e - c_v\}$ is 
a solution of IP2, i.e., $E''$ is also a solution of IP2. Define $n_2=\sum_v z''_v$ 
and $n_1=\sum_ey''_e -n_2 = |E''|-n_2$. We have $\phi(E') = 2m_1+(1-\epsilon)m_2$ and 
$\phi(E'') = 2n_1+(1-\epsilon)n_2$. Let $OPT$ denote the optimal value of the IP2 objective
function. Then $\phi(E'') \leq OPT$ and $OPT/\alpha \leq \phi(E')$. So
$2n_1 + (1-\epsilon)n_2 \leq \alpha(2m_1 + (1-\epsilon)m_2)$. So $(1-\epsilon)|E''|
= (1-\epsilon)(n_1+n_2) \leq 2n_1+(1-\epsilon)n_2 \leq \alpha(2m_1+(1-\epsilon)m_2)
\leq 2\alpha (m_1+m_2) = 2\alpha\cdot |E'|$. $\Box$
\end{proof}


\subsection{Algorithm for IP2}
We propose Algorithm \ref{iterative_algorithm} which approximates the IP2 problem within 
a constant factor of approximation. LP2 is the linear program relaxation of IP2.
Here we assume that an additional constraint is imposed, namely, $\{z_v=0| v\in C\}$
where we require a solution in which every $v\in C$ must necessarily satisfy the degree constraint.
The input to the problem is $(H=(V,E),C)$.
Algorithm starts with $E'=\emptyset$ and builds it up one edge at a time by iterative rounding.
In each iteration we discard at least one edge from further consideration. Hence
it requires at most $|E|$ iterations (actually it requires at most $|V|+1$ iterations, see the remark
below.) In the interest of ease in analysis Algorithm \ref{iterative_algorithm} is presented 
in the recursive format.

\begin{eqnarray*}
    \textrm{LP2: maximize } & &\phi = 2\sum_{e \in E} y_e-(1+\epsilon) \sum_{v \in V}z_v,
\textrm{ for some } \epsilon >0  \\
    \textrm{subject to } & & \sum_{e \in \delta(v)} y_e\leq c_v + z_v\; \forall v\in V\setminus C\\
    & & \sum_{e \in \delta(v)} y_e\leq c_v \; \forall v\in C\\
                     & & z_v\geq 0\; \forall v\in V\\
                     & & y_e\geq 0\; \forall e\in E\\
                     & & -y_e\geq -1\; \forall e\in E
  \end{eqnarray*}

\begin{algorithm}
\caption{Iterative Rounding based Algorithm in Recursive Format}
\label{iterative_algorithm}
\KwData{A connected graph $G=(V,E)$ and a function $c: V \rightarrow \mathbb{N}$}
\KwResult{A solution of PDBEP problem.}
	\For {$v\in V$} {$f_v:=c_v$;}
	$C := \emptyset$\;
	$E' := SolveIP2(G,C,f)$\;
	\KwRet{$E'$}\;
\vspace*{3mm}
{\bf Function}: SolveIP2$(H=(V_H,E_H),C,f)$\\
	\If{$E_H :=\emptyset$}{\KwRet{$\emptyset$};}
	Delete all isolated vertices from $V_H$\;
	$({\bf y},{\bf z}) =$ LPSolver$(H,C)$\;
       \tcc{solve LP2 with degree-bounds $f(x)$ for all $x\in V_H$} 
        \eIf {$\exists e \in E_H$ with $y_e=0$}
	{$H_1 := (V_H,E_H\setminus \{e\})$\;
	$C_1 := C$\;
	$E' := SolveIP2(H_1,C_1,f)$;}
	{From Lemma \ref{lem6} there exists an edge $e:=(u,v)$ with $y_e\geq 1/2$\;
	      From Lemma \ref{lem7} w.l.g. we assume $(f_v>0, z_v=0)$\;
	      $f_v:=f_v-1$\;
	      $C_1 := C \cup \{v\}$\;
	      $f_u := \max \{f_u-1,0\}$\;
	      $H_1 := (V_H, E_H\setminus \{e\})$\;
	      $E' := SolveIP2(H_1,C_1,f)\cup \{e\}$\;
              \tcc{By including $e$ in $E'$ we effectively rounded up $y_e$ to $1$. Hence in case $f_u=0$
                   then implicitly $z_u$ is also raised to ensure that the condition $\sum_{e'\in \delta(u)}y_{e'}
                   \leq f_u+z_u$ continues to hold. We do not explicitly increase $z_u$ value since
                   it is not output as a part of the solution.}
             }
	\KwRet{$E'$}\;
\end{algorithm}

In the following analysis we will focus on two problems: $(H,C)$ of some $i$-th
nested recursive call and $(H_1,C_1)$ of the next call. For simplicity we will refer to
them as the problems of graphs $H$ and $H_1$ respectively.

\begin{lemma} \label{lem6} In a corner solution of LP2 on a non-empty graph 
there is at least one edge $e$ with $y_e=0$ or $y_e\geq 1/2$.
\end{lemma}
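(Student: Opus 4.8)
The plan is to analyze a corner (basic feasible) solution $(\mathbf{y},\mathbf{z})$ of LP2 via a counting/rank argument. Suppose, for contradiction, that every edge $e$ satisfies $0 < y_e < 1/2$. Since $(\mathbf{y},\mathbf{z})$ is a corner solution in the space of variables $\{y_e\}_{e\in E}\cup\{z_v\}_{v\in V}$, the number of linearly independent tight constraints must equal $|E|+|V|$. First I would observe which constraints can be tight under our assumption: the box constraints $y_e\ge 0$ and $-y_e\ge -1$ are both violated (not tight) for every edge, since $0<y_e<1/2<1$; and the constraints $z_v\ge 0$ contribute at most $|V|$ tight constraints, but each such tight constraint just fixes one $z_v=0$. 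So the only remaining constraints are the $|V|$ degree inequalities $\sum_{e\in\delta(v)}y_e\le f_v+z_v$ (or $\le f_v$ for $v\in C$).

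Next I would set up the bookkeeping. Let $V_0=\{v: z_v=0 \text{ is tight}\}$ and $V_+=V\setminus V_0$, so $|V_0|$ tight constraints come from the nonnegativity of $z$. To reach rank $|E|+|V|$ we need at least $|E|+|V|-|V_0| = |E|+|V_+|$ tight degree constraints, but there are only $|V|$ of them; hence \emph{every} degree constraint is tight, and moreover $|V_+|\le 0$ unless... — more carefully, we need $|E| + |V_+| \le |V|$ tight degree constraints available, i.e. $|E|\le |V_0|\le |V|$, and actually all $|V|$ degree constraints are tight with $|V_0|\ge |E|$. In particular, for every vertex $v\in V_+$ (where $z_v>0$), the degree constraint being tight gives $\sum_{e\in\delta(v)}y_e = f_v+z_v > f_v\ge 0$. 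The key quantitative step is then to sum the tight degree equalities over all $v$: $\sum_{v\in V}\sum_{e\in\delta(v)}y_e = 2\sum_{e\in E}y_e < 2\cdot\frac{|E|}{2} = |E|$, using $y_e<1/2$. On the other hand the right-hand side is $\sum_v (f_v+z_v)\ge \sum_v z_v \ge |V_+|$, but I also want a lower bound forcing a contradiction with $|E|\ge |V|$ (which we derived). The cleaner route: since all $|V|$ degree constraints are tight and at least $|E|$ of the $z_v\ge0$ constraints are tight, we get $|E|\le |V|$; combined with connectivity ($|E|\ge |V|-1$), either $|E|=|V|-1$ (a tree) or $|E|=|V|$ (unicyclic). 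I would then dispatch these two small cases directly: in a tree, a leaf $v$ has its single incident edge $e$ with $y_e\le f_v+z_v$; if that is tight and $z_v=0$ then $y_e=f_v$, and if $f_v\ge 1$ then $y_e\ge 1$, contradiction, while $f_v=0$ forces $y_e=0$, also a contradiction to $0<y_e<1/2$ — unless $z_v>0$, but a leaf with $z_v>0$ makes $z_v\ge 0$ non-tight and the rank count fails. The unicyclic case is handled analogously by peeling leaves.

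The step I expect to be the main obstacle is the rank/counting argument itself — specifically, being careful that the $|E|+|V|$ linearly independent tight constraints really must include all $|V|$ degree constraints, which requires ruling out any other source of tight constraints and checking linear independence of the chosen set (the degree constraints restricted to a forest/unicyclic edge set are indeed independent, which is the standard fact that an incidence-style matrix of a connected graph has rank $|V|-1$ or $|V|$). Once the graph is pinned down to be a tree or unicyclic, finishing is routine leaf-peeling. An alternative, possibly slicker, finish is to avoid case analysis: from all degree constraints tight, $2\sum_e y_e = \sum_v(f_v+z_v)$; if additionally every $y_e\in(0,1/2)$ then the fractional support graph has every vertex degree $\ge 1$ and we can find a cycle or a maximal path whose alternating adjustment $y_e\to y_e\pm t$ keeps all tight degree equalities satisfied while staying feasible (possible since $0<y_e<1/2<1$ leaves slack on both box constraints), contradicting that the solution is a vertex of LP2. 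I would present whichever of these two finishes is shorter given the surrounding notation.
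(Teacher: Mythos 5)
Your overall strategy---count tight constraints at a corner point under the assumption $0<y_e<1/2$---is the same basic mechanism the paper uses, but your execution has genuine gaps. First, your claim that \emph{every} degree constraint must be tight does not follow from the count: from rank $|E|+|V|$ you only get that at least $|E|+|V|-|V_0|$ degree constraints are tight, and in the tree case ($|E|=|V|-1$ with all $z_v=0$ tight) one degree constraint may be slack. Your leaf argument needs a leaf whose degree constraint is tight \emph{and} whose $z_v=0$; to get it you must argue that at most one of the $2|V|$ candidate constraints can be slack while a tree has at least two leaves---your remark that ``a leaf with $z_v>0$ makes the rank count fail'' is not correct as stated. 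Second, and more seriously, you invoke connectivity to get $|E|\ge |V|-1$ and reduce to trees and unicyclic graphs. The lemma is applied inside Algorithm \ref{iterative_algorithm} to residual graphs obtained by repeatedly deleting edges (and isolated vertices); these graphs need not be connected, and the lemma only assumes a non-empty graph. With only the global bound $|E|\le |V|$, a disconnected graph can contain a component with more edges than vertices, so the tree/unicyclic dichotomy breaks. This is repairable---LP2 decomposes over components, so a corner solution restricts to a corner on each component and the count can be run componentwise---but that step is missing. Third, ``peeling leaves'' does not cover the leafless case: a unicyclic component may be a bare cycle, where you instead need to note that tight degree constraints force $f_v=0$ (each vertex sees two edges of value $<1/2$) and hence $y\equiv 0$ on the cycle, contradicting $y_e>0$.

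For comparison, the paper's proof needs none of this structure: it partitions the vertices by the signs of $f_v$ and $z_v$, observes that a tight degree constraint at a vertex with $f_v\ge 1$ forces at least three incident edges (because every $y_e<1/2$), and from the handshake bound $m\ge (3n'_1+3n'_2+n''_1+n''_2+n_3)/2$ compares the number of variables with the number of tight constraints to conclude $n_1=n_2=n_3=0$ outright---no connectivity, no case analysis on the graph's shape. If you add the componentwise counting, the ``at most one slack constraint versus at least two leaves'' argument, and the bare-cycle case, your route does work, but as written it is incomplete.
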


\begin{proof}
Assume the contrary that in an extreme point solution of LP2 all $y_e$ are in 
the open interval
$(0,1/2)$. Let us partition the vertices as follows.
Let $n_1$ vertices have $f_v>0$ and $z_v>0$, $n_2$ vertices have
$f_v>0$ and $z_v=0$ and $n_3$ vertices have $f_v=0$ and $z_v>0$. Note that the case of 
$f_v=0$ and $z_v=0$ cannot arise because $y_e>0$ for all $e$. In each case let $n'_i$ 
vertices have the condition
$\sum_{e\in \delta(v)} y_e \leq f_v + z_v$ tight (an equality) and $n''_i$ vertices
have the condition a strict inequality. Let the number of edges be $m$.

The total number of variables is $n_1+n_2+n_3+m$. In $n'_1+n_2'$ cases $\sum_{e\in \delta(v)} y_e
= f_v + z_v$ where $f_v\geq 1$ and each $y_e<0.5$ so there must be at least $3$ edges
incident on such vertices. Since the graph has no isolated vertices, every remaining vertex
has at least one incident edge. Hence $m\geq (3n'_1 + 3n'_2+ n''_1+n''_2+n_3)/2$. So
number of variables is at least $n'_1 + n'_2+ (1.5)(n_1+n_2+n_3)$.

Now we find the number of tight conditions. None of the $y_e$ touch their bounds.
The number of $z_v$ which are equal to zero is $n_2$, and the number of instances when
$\sum_{e\in \delta(v)} y_e = f_v + z_v$ is $n'_1+n'_2+n'_3$. Hence the total
number of conditions which are tight is $n_2+n'_1+n'_2+n'_3$. Since the solution is
an extreme point, the number of tight conditions must not be less than the number of
variables. So $n_2+n'_1+n'_2+n'_3 \geq n'_1 + n'_2 + (1.5)(n_1+n_2+n_3)$.
This implies that $n_1=n_2=n_3=0$, which is absurd since the input graph is not empty.
$\Box$
\end{proof}

\noindent
{\bf Remark}: The program LP2 has $|E|+|V|$ variables and $2|E|+2|V|$ constraints. Hence in the 
first iteration the optimal solution must have at least $|E|-|V|$ tight edge-constraints
(i.e., $y_e=0$ or $y_e=1$.) All these can be processed simultaneously so in the second round at 
most $|V|$ edges will remain in the residual graph. Thus the total number of iterations cannot exceed $|V|+1$.

\begin{lemma}\label{lem7} If $y_e\geq 1/2$ in the solution of LP2 where $e=(u,v)$, 
then $(f_u>0, z_u=0)$ or $(f_v>0, z_v=0)$.
\end{lemma}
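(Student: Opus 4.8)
The plan is to argue by contradiction, exploiting the fact that the $z$-variables are penalized in the objective with coefficient $1+\epsilon>1$: any solution that could ``pay down'' a unit of some $z_w$ by shedding a bit of an edge with $y_e>0$ would strictly improve $\phi$. So suppose the claimed conclusion fails for $e=(u,v)$, i.e.\ neither $(f_u>0,\,z_u=0)$ nor $(f_v>0,\,z_v=0)$ holds. First I would record the easy consequence that, since $y_e\ge 1/2>0$ and $e\in\delta(w)$ for each $w\in\{u,v\}$, the vertex constraint at $w$ forces $f_w+z_w\ge 1/2$ (and $f_w\ge 1/2$ if $w\in C$, since then $z_w$ is pinned to $0$). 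A short case check then shows that for each endpoint $w$ we must in fact have $w\notin C$ and $z_w>0$: if $w\in C$ then $z_w=0$, so either $f_w=0$, which is impossible because the constraint $\sum_{e'\in\delta(w)}y_{e'}\le f_w$ would be violated by $y_e\ge 1/2$, or $f_w>0$, but then $(f_w>0,z_w=0)$ holds, contradicting our assumption; and if $w\notin C$, failure of the conclusion at $w$ together with $f_w+z_w\ge 1/2$ leaves only $z_w>0$ as a possibility.

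Having established $u,v\notin C$ and $z_u,z_v>0$, the next step is a local perturbation. I would decrease $y_e$, $z_u$ and $z_v$ simultaneously by a small amount $\delta>0$. Feasibility is immediate: $y_e$ occurs only in the two vertex constraints at $u$ and $v$ and in the bounds $0\le y_e\le 1$, and in each of those two vertex constraints both sides drop by exactly $\delta$; nonnegativity of $y_e$, $z_u$, $z_v$ survives because $y_e\ge 1/2$ and $z_u,z_v>0$, so $\delta$ can be taken small enough; every other constraint is untouched. The change in the objective is $2(-\delta)-(1+\epsilon)(-\delta-\delta)=2\epsilon\delta>0$, so the perturbed point is feasible with strictly larger $\phi$, contradicting optimality of the LP2 solution returned by LPSolver. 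Hence the conclusion must hold.

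The only delicate point is the bookkeeping in the first paragraph: one must make sure that ``the conclusion fails'' genuinely forces $z_w>0$ at \emph{both} endpoints (rather than merely $f_w=0$), and in particular rules out $w\in C$, since the perturbation crucially needs room to decrease both $z_u$ and $z_v$. Everything else is a one-line computation. Note also that $\epsilon>0$ is used essentially: with $\epsilon=0$ the perturbation would be objective-neutral and the statement could fail.
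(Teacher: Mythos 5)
Your proof is correct and follows essentially the same route as the paper: the paper likewise rules out $z_u>0$ and $z_v>0$ simultaneously by subtracting $\beta=\min\{z_u,z_v,y_e\}$ from all three variables to gain $2\beta\epsilon$ in the objective, and rules out $f_w=0,\,z_w=0$ at an endpoint because that would force $y_e=0$. Your extra bookkeeping for vertices in $C$ is a harmless refinement of what the paper leaves implicit (it imposes $z_v=0$ for $v\in C$).
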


\begin{proof} 

Assume that $z_v>0$ and $z_u>0$ in the solution. Let minimum of $z_v$, $z_u$, and $y_e$ 
be $\beta$. Subtracting $\beta$ from these variables results in a feasible 
solution with objective function value greater than the optimum by $2\beta\cdot \epsilon$. This
is absurd. Hence $z_u$ and $z_v$ both cannot be positive.

Next assume that $f_u=0$ and $z_u=0$. Then $y_e$ must be zero, contradicting the fact
that $y_e\geq 1/2$. Similarly $f_v=0$ and $z_v=0$ is also not possible.

Therefore either $(f_u>0, z_u=0)$ or $(f_v>0, z_v=0)$.
$\Box$
\end{proof}

\begin{lemma} \label{lem8} The Algorithm \ref{iterative_algorithm} returns a feasible solution 
of PDBEP.
\end{lemma}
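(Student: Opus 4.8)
The plan is to prove Lemma \ref{lem8} by induction on the depth of the recursion in the function \texttt{SolveIP2}, showing that the edge set $E'$ returned by every call satisfies the degree-condition with respect to the \emph{current} bounds $f$, and moreover that every vertex placed in the set $C$ of a call genuinely has its full degree in the returned set bounded by its $f$-value. The base case is the empty graph, where $E'=\emptyset$ is trivially feasible. For the inductive step there are two cases matching the two branches of the algorithm.

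\textbf{The easy branch (some $y_e=0$).} Here we delete the edge $e$ and recurse on $(H_1,C_1)=(V_H,E_H\setminus\{e\},C)$ with the same $f$. By the inductive hypothesis the returned set is feasible for $(H_1,C_1,f)$; since we return exactly that set and it uses a subset of the original edges with the same bounds, nothing can be violated, so it is feasible for $(H,C,f)$ as well. I would also note that no vertex's degree in $E'$ exceeds what it was in the recursive call, so the invariant about $C$-vertices is preserved.

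\textbf{The hard branch (some $y_e\ge 1/2$).} This is the step I expect to be the main obstacle, because it is where we commit to including the edge $e=(u,v)$ in the output while decrementing bounds. By Lemma \ref{lem7} we may assume $(f_v>0,z_v=0)$; the algorithm sets $f_v:=f_v-1$, adds $v$ to $C_1$, sets $f_u:=\max\{f_u-1,0\}$, deletes $e$, recurses, and then adds $e$ back. By the inductive hypothesis, $SolveIP2(H_1,C_1,f)$ returns a set $E''$ feasible for the decremented bounds, and in particular (since $v\in C_1$) the degree of $v$ in $E''$ is at most $f_v-1$; hence the degree of $v$ in $E''\cup\{e\}$ is at most $f_v$, its original bound, so $v$ satisfies the degree-condition in $E'$. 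The subtle point is $u$: if $f_u>0$ the same argument via the decrement gives $d_{E'}(u)\le$ original $f_u$; if $f_u=0$ we cannot bound $u$'s degree, but then for \emph{every} edge $e'$ incident on $u$ in $E'$, the other endpoint must satisfy its bound — and here is exactly where I need to be careful. I would argue this by tracking, along the chain of recursive calls, the fact that whenever an edge incident on a "saturated" vertex ($f=0$) is added, it is always added because its \emph{other} endpoint was chosen as the $v$ with $f_v>0$ and put into its call's $C$-set; that other endpoint's bound is then honored by the outer induction, so the edge satisfies the degree-condition through that endpoint.

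\textbf{Assembling the conclusion.} Putting the two branches together, every call returns a set feasible for its own $(H,C,f)$, and the invariant about $C$-membership is what makes the hard branch go through. Applying this to the top-level call, where $f_v=c_v$ for all $v$ and $C=\emptyset$, shows that Algorithm \ref{iterative_algorithm} returns a set satisfying the original degree-condition, i.e.\ a feasible solution of PDBEP. The one place to be pedantic is reconciling the algorithm's comment about "implicitly raising $z_u$": since $z$-values are never output, the only thing that matters for feasibility of PDBEP is the degree-condition on edges, and the argument above establishes it for every edge by exhibiting, for each edge, an endpoint whose degree stays within its original bound.
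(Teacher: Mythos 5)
Your proof is correct and takes essentially the same route as the paper's: induction on the recursion, using Lemma \ref{lem7} to certify the added edge $e$ through the endpoint $v$ whose bound is decremented and which is placed in $C_1$ (so its degree in the recursive solution is at most $f_v-1$, hence at most $f_v$ after adding $e$), with all remaining edges handled by the induction hypothesis. You merely spell out explicitly (the invariant on $C$-vertices and the decrement of $f_u$, including the $f_u=0$ case) what the paper compresses into ``other edges are valid due to induction hypothesis.''
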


\begin{proof}
The claim is trivially true when the graph is empty. We will use induction.

In the case of $y_e=0$, the solution of $H_1$ is also the solution of $H$.
From induction hypothesis it is feasible for $H_1$ hence it is also feasible for $H$.

Consider the  second case, i.e., $y_e\geq 1/2$. Let $e=(u,v)$. From Lemma \ref{lem7} 
$f_v=a>0$ and $z_v=0$. 
Since $z_{1v}=0$ and $f_{1v}=a-1$, in the solution
of $H_1$ at most $a-1$ edges can be incident on $v$. So there are at most
$a$ edges incident on $v$ and $f_v=a$ in the solution of $H$. Thus $e$ is valid in the solution 
of $H$. Other edges are valid due to induction hypothesis. $\Box$
\end{proof}

Now we analyze the performance of the algorithm.

\begin{lemma}\label {lem9} Algorithm \ref{iterative_algorithm} gives a $1.5/(1-\epsilon)$ approximation of IP2.
\end{lemma}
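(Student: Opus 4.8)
The plan is to establish, by induction on the number of edges of $H$, the invariant that the set $E'$ returned by $SolveIP2(H,C,f)$ satisfies $\phi_H(E')\ge \frac{2(1-\epsilon)}{3}\,\mathrm{OPT}(H,C,f)$, where $\phi_H$ denotes the IP2 objective read with degree bounds $f$ (so for an integer solution $z_w=\max\{0,\deg_{E'}(w)-f_w\}$) and $\mathrm{OPT}(H,C,f)$ is the optimum of LP2 for the current subproblem (bounds $f$, and $z_w=0$ forced for $w\in C$). Since the top-level call has $f=c$, $C=\emptyset$, and $\mathrm{OPT}(G,\emptyset,c)\ge \mathrm{OPT}_{IP2}$, the invariant immediately yields a $\frac{1.5}{1-\epsilon}$ approximation of IP2. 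The base case is the empty graph, where both sides are $0$. For the inductive step let $(\mathbf y^\ast,\mathbf z^\ast)$ be the extreme-point solution returned by LPSolver for $H$ (so $\mathrm{OPT}(H,C,f)=\phi_H(\mathbf y^\ast)$); this is exactly the solution the algorithm branches on.

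In the branch where some $y^\ast_e=0$ the subproblem is $H$ with $e$ deleted; restricting $\mathbf y^\ast$ shows $\mathrm{OPT}$ is unchanged, and since $E'=E'_1$ and $f$ is unchanged, $\phi_H(E')=\phi_{H_1}(E'_1)$, so the invariant is inherited. The work is in the branch $e=(u,v)$ with $y^\ast_e\ge\frac12$ and, by Lemma~\ref{lem7}, $f_v>0$, $z^\ast_v=0$ (so $v$ is placed in $C_1$ with $f_{1,v}=f_v-1$ and $f_{1,u}=\max\{f_u-1,0\}$). First I would do the objective bookkeeping for $E'=E'_1\cup\{e\}$: feasibility of $E'_1$ for $H_1$ (Lemma~\ref{lem8}) gives $\deg_{E'_1}(v)\le f_v-1$, so $v$ contributes no penalty in $\phi_H$; at $u$ one checks $z_u$ is unchanged from the subproblem when $f_u\ge1$ and increases by exactly $1$ when $f_u=0$; no other vertex is affected. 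Hence $\phi_H(E')=\phi_{H_1}(E'_1)+\Delta$ with $\Delta=2$ if $f_u\ge1$ and $\Delta=1-\epsilon$ if $f_u=0$.

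The crux is to bound how far $\mathrm{OPT}$ can fall from $H$ to $H_1$. I would edit $(\mathbf y^\ast,\mathbf z^\ast)$ into a feasible solution of the LP2 of $(H_1,C_1,f_1)$: drop $y_e$ (the objective loses $2y^\ast_e$); at $v$, since $\sum_{\delta(v)\setminus e}y^\ast\le f_v-y^\ast_e$, shrink incident edges by a total of at most $1-y^\ast_e$ to reach $\sum_{\delta(v)\setminus e}y\le f_v-1$ with $z_v=0$ (objective loss $\le 2(1-y^\ast_e)$); at $u$, if $f_u\ge1$ keep $z^\ast_u$ and shrink incident edges by at most $1-y^\ast_e$ (loss $\le2(1-y^\ast_e)$), while if $f_u=0$ just lower $z_u$ by $y^\ast_e$ (valid since $\sum_{\delta(u)}y^\ast\le z^\ast_u$), which \emph{raises} the objective by $(1+\epsilon)y^\ast_e$. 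Using $y^\ast_e\ge\frac12$, the total drop is at most $3$ when $f_u\ge1$ and at most $\frac32$ when $f_u=0$. Chaining the bookkeeping, the induction hypothesis $\phi_{H_1}(E'_1)\ge\frac{2(1-\epsilon)}{3}\mathrm{OPT}(H_1,C_1,f_1)$, and the drop bound, the invariant for $H$ follows provided $2\ge\frac{2(1-\epsilon)}{3}\cdot3$ (case $f_u\ge1$) and $1-\epsilon\ge\frac{2(1-\epsilon)}{3}\cdot\frac32$ (case $f_u=0$); both hold, the latter with equality.

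The step I expect to be the real obstacle is the LP-editing surgery: one must check that after deleting $y_e$, shrinking a few edges, and shifting $z_u$, \emph{every} LP2 constraint of the subproblem is still satisfied — the tightened bound at $v$, the shifted bound at $u$, all untouched vertices, and the box constraints $0\le y\le1$, $z\ge0$ — and that the reductions requested at $u$ and at $v$ are on disjoint edge sets, which holds in a simple graph because the only edge common to $\delta(u)$ and $\delta(v)$ is $e$ itself, already removed. A secondary point worth stating explicitly is that the comparison must be against the particular extreme-point solution produced by LPSolver, so that the $y_e=0$-or-$y_e\ge\frac12$ dichotomy of Lemma~\ref{lem6} applies to the very solution the recursion uses.
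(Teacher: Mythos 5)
Your proposal is correct and follows essentially the same route as the paper's proof: induction over the recursion, the same two branches ($y_e=0$ versus $y_e\ge 1/2$ with the subcases $f_u>0$ and $f_u=0$), the same bookkeeping $\Delta=2$ or $\Delta=1-\epsilon$ for the integral solution, and the same surgery turning the optimal LP2 solution of $H$ into a feasible LP2 solution of $H_1$ with objective drop at most $3$ (resp.\ $3/2$), leading to the identical arithmetic. The only difference is presentational (you state the invariant additively as $\phi(E')\ge \tfrac{2(1-\epsilon)}{3}\mathrm{OPT}$ rather than as a ratio bound), so there is nothing substantive to add.
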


\begin{proof} Let $c$ denote $1.5/(1-\epsilon)$.
We will denote the optimal LP2 solutions of $H$ and $H_1$
by $F$ and $F_1$ respectively. Similarly $I$ and $I_1$ will denote the 
solutions computed by the algorithm for $H$ and $H_1$ respectively. 
$f_{1*}$ and $z_{1*}$ denote the parameters associated with $H_1$.
We will assume that $z_x = \max\{0,\sum_{e\in \delta(x)}y_e - f_x\}$
for integral solutions to compute their $\phi$-values. 
Again we will
prove the claim by induction. The base case is trivially true. From induction hypothesis
$\phi(F_1)/\phi(I_1) \leq c$ and our goal is to show the same bound holds for
$\phi(F)/\phi(I)$.

In the event of $y_e=0$ in $F$, $\phi(F)=\phi(F_1)$ and $\phi(I)=\phi(I_1)$.
Hence $\phi(F)/\phi(I) = \phi(F_1)/\phi(I_1)$.

In case $y_e=\alpha \geq 1/2$ we will consider two cases: (i) $f_u>0$ and (ii) $f_u=0$ in
$F$. In the first case $I$ differs from $I_1$ in three aspects: $y_e=1$ in $I$, $f_v=f_{1v}+1$
and $f_u=f_{1u}+1$. So $z_v$ and $z_u$ remain unchanged, i.e., $z_v=z_{1v}$ and $z_u=z_{1u}$.
Thus $\phi(I)=\phi(I_1)+2$. In the second case also $y_e$ increases by $1$ and
$z_v$ remains unchanged but $z_u$ increases
by $1$ because in this case $f_u=f_{1u}=0$. Hence $\phi(I) = \phi(I_1)+1-\epsilon$.

From induction hypothesis $\phi(F_1) \leq c\phi(I_1)$. Hence for any solution
$F'_1$ of $H_1$, we have $\phi(F'_1) \leq c\phi(I_1)$. In the remaining part of the
proof we will construct a solution of LP2 for $H_1$ from $F$, the optimal solution of LP2 for
$H$.

Again we will consider the two cases separately. First the case of $f_u>0$.
Set $y_e=0$. If $\sum_{e'\in \delta(v)\setminus \{e\}}y_{e'} \geq 1-\alpha$ then subtract the
values of $y_{e'}$ for $e'\in \delta(v)\setminus \{e\}$ in arbitrary manner so that
the sum $\sum_{e'\in \delta(v)\setminus \{e\}}y_{e'}$ decreases by $1-\alpha$. If
$\sum_{e'\in \delta(v)\setminus \{e\}}y_{e'} < 1-\alpha$, then set $y_{e'}$ to zero for all edges
incident on $v$. Repeat this step for edges incident on $u$.
Retain values of all other variables as in $F$ (in particular, the values
of $z_u$ and $z_v$). Observe that these values constitute a solution of LP2 for $H_1$. Call
this solution $F'_1$. Then $\phi(F'_1) \geq \phi(F) - 2(1 + 1-\alpha) \geq \phi(F)-3$.
We have $\phi(F) \leq \phi(F'_1) +3 \leq c(\phi(I_1)) +3 \leq c(\phi(I)-2)+3 \leq c\phi(I)$.

In the second case $f_u=0$. Once again repeat the step described for edges incident on $v$
and set $y_e$ to zero.
In this case $z_u\geq \alpha$ so subtract $\alpha$ from it. It is easy to see that again the
resulting variable values form an LP2 solution of $H_1$, 
call it $F'_1$. So $\phi(F'_1) = \phi(F) -(2-(1+\epsilon)\alpha)$. So $\phi(F) =
\phi(F'_1) + (2-(1+\epsilon)\alpha) \leq c\phi(I_1) + (2-(1+\epsilon)\alpha)$. Plugging 
$\phi(I) -1 +\epsilon$ for $\phi(I_1)$ and simplifying the expression gives
$\phi(F) \leq c\phi(I)$. This completes the proof. $\Box$
\end{proof}

Combining lemmas \ref{lem5} and \ref{lem9} we have the following result.

\begin{theorem} \label{thm3} Algorithm \ref{iterative_algorithm} approximates PDBEP with approximation factor 
$3/(1-\epsilon)^2$.
\end{theorem}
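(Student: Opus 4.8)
The plan is to simply compose the two approximation guarantees already established. Theorem~\ref{thm3} asserts that Algorithm~\ref{iterative_algorithm} approximates PDBEP within a factor of $3/(1-\epsilon)^2$, and Lemmas~\ref{lem5} and~\ref{lem9} together already do all the work, so the proof is short.

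First I would recall the output of Algorithm~\ref{iterative_algorithm}. By Lemma~\ref{lem8} the edge set $E'$ returned is a feasible solution of PDBEP; I would also note that by construction it is a \emph{maximal} solution of IP2 (no edge can be added or removed to improve $\phi$), since at the leaves of the recursion the residual graph is empty and the rounding steps never create violating edges between two vertices with positive $z$. This is the hypothesis needed to invoke Lemma~\ref{lem5}. Then from Lemma~\ref{lem9} the computed solution is a $c$-approximation of IP2 with $c = 1.5/(1-\epsilon)$, i.e.\ $\phi(\text{OPT}_{\mathrm{IP2}}) \le c\,\phi(E')$, so $E'$ is an $\alpha$-approximate maximal solution of IP2 with $\alpha = 1.5/(1-\epsilon)$.

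Next I would apply Lemma~\ref{lem5}, which says any $\alpha$-approximate maximal solution of IP2 is a $2\alpha/(1-\epsilon)$-approximation of PDBEP. Substituting $\alpha = 1.5/(1-\epsilon)$ gives an approximation factor of
\[
\frac{2\alpha}{1-\epsilon} = \frac{2}{1-\epsilon}\cdot\frac{1.5}{1-\epsilon} = \frac{3}{(1-\epsilon)^2},
\]
which is exactly the claimed bound, completing the proof.

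The only genuine subtlety — and the one step I would be careful about — is checking that the solution produced by Algorithm~\ref{iterative_algorithm} is indeed maximal in the sense required by Lemmas~\ref{lem4} and~\ref{lem5}, namely that $z_v = \max\{0,\sum_{e\in\delta(v)}y_e - c_v\}$ for every $v$ and that no single-edge modification improves $\phi$. Feasibility for PDBEP (Lemma~\ref{lem8}) already rules out an edge both of whose endpoints have $z_v\ge 1$, which is precisely the only local move that could increase $\phi$; adding back a discarded edge cannot help because that edge was either forced to $y_e=0$ by the LP or would re-create a violation. Everything else is the routine arithmetic displayed above, so I do not expect any obstacle beyond this bookkeeping.
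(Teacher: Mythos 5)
Your proof is correct and is essentially the paper's own argument: the paper proves Theorem~\ref{thm3} simply by combining Lemma~\ref{lem5} (any $\alpha$-approximate maximal solution of IP2 is a $2\alpha/(1-\epsilon)$-approximation of PDBEP) with Lemma~\ref{lem9} ($\alpha = 1.5/(1-\epsilon)$), yielding $3/(1-\epsilon)^2$ exactly as you compute. Your extra care about the maximality/feasibility hypothesis (PDBEP feasibility via Lemma~\ref{lem8} and the minimal choice of the $z$-values when evaluating $\phi$) is precisely what the paper leaves implicit, so there is no substantive difference in approach.
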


\section{Approximation Algorithm for the weighted case}

Let $H(v)$ denote the heaviest $c_v$ edges incident on vertex $v$, called {\em heavy set} of 
vertex $v$. Then from a generalization of
Lemma \ref{lem1} the optimum solution of PDBEP in weighted-edge case is bounded by
$\sum_{v\in V}\sum_{e\in H(v)}w(e)$ where $w(e)$ denotes the weight of edge $e$.
We will describe a method to construct upto $1+2\log |V|$ solutions, which cover 
$\cup_{v\in V}H(v)$. Then the heaviest solution gives a $2+2\log |V|$ approximation of the problem.

\subsection{The Algorithm}
Input: A graph $(V,E)$ with non-negative edge-weight function $w()$. Let $|V|=n$.

Step 1: $E_1 = E\setminus \{e=(u,v)\in E|e\notin H(u) \textrm{ and } e\notin H(v)\}$.

Step 2: $T = \{e=(u,v)\in E| e\in H(u) \textrm{ and } e\in H(v)\}$.

Step 3: $E_2 = E_1 \setminus T $. Clearly each edge of $E_2$ is in the heavy set of one of its
end-vertices. Suppose $e=(u,v)\in E$ with $e\notin H(u)$ and $e\in H(v)$. Then we will
think of $e$ as directed from $u$ to $v$.

Step 4: Arbitrarily label the vertices from $0$ to $n-1$. Define sets of edges $A_0,
\dots, A_{k-1}$ and $B_0,\dots, B_{k-1}$, where $k=\log n$, as follows. 
$A_r$ consists of edges $(u,v)$ directed from $u$ to $v$, such that the $r-1$
least significant bits of binary expansion of the labels of $u$ and $v$ are same and $r$-th bit of $u$ 
is zero and the same bit of $v$ is one. $B_r$ is defined similarly except the
$r$-th bit of $u$ is one and that of $v$ is zero.

Step 5: Output that set among the $2\log n + 1$ sets, $T,A_0,\dots,A_{k},B_0,\dots,B_{k}$,
which has maximum cumulative edge weight.

\begin{theorem} \label{thm4}
The algorithm gives a feasible solution with approximation factor $2+2\log n$.
\end{theorem}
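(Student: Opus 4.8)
The plan is to establish two things: first, that each of the $2\log n + 1$ output sets $T, A_0,\dots,A_{k-1}, B_0,\dots,B_{k-1}$ is a feasible solution of PDBEP, and second, that together these sets cover $\bigcup_{v\in V} H(v)$, so the heaviest among them carries at least a $\tfrac{1}{2\log n + 1}$ fraction of $\sum_{v\in V}\sum_{e\in H(v)} w(e)$, which by the weighted generalization of Lemma~\ref{lem1} upper-bounds $OPT$. Combining these gives the claimed $2+2\log n$ factor (the extra factor of $2$ coming from the fact that an edge in $H(u)\cap H(v)$ is double-counted in the upper bound).

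For feasibility of $T$: by Step~2, every edge $e=(u,v)\in T$ lies in $H(u)$ \emph{and} in $H(v)$; since $|H(x)| = c_x$, each vertex $x$ can be incident to at most $c_x$ edges of $T$, so \emph{both} endpoints satisfy the degree condition with room to spare, and $T$ is trivially feasible. For feasibility of $A_r$ (and symmetrically $B_r$): I would argue that in $A_r$ every directed edge $(u,v)$ has $e\in H(v)$, and I claim each vertex $v$ receives at most $c_v$ incoming edges within $A_r$ — indeed at most $c_v$ incoming edges even across all of $E_2$, since an incoming edge at $v$ must belong to $H(v)$. Hence the head endpoint of every edge in $A_r$ satisfies its degree bound, so $A_r$ is feasible. (The partitioning by bits of the labels is not actually needed for feasibility here — it is needed only for the covering property, to keep the sets' union small while still covering; I should double-check whether the authors intend the bit-splitting to also control the tail degrees, but for the degree condition the head-endpoint argument suffices.)

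For the covering property: I would show $\bigcup_{v} H(v) \subseteq T \cup \bigcup_r A_r \cup \bigcup_r B_r$. Any edge $e$ in some $H(v)$ survives to $E_1$ by Step~1. If $e\in H(u)\cap H(v)$ then $e\in T$. Otherwise $e\in E_2$, oriented from the endpoint not containing it in its heavy set to the one that does; say $u\to v$ with labels differing. Let $r$ be the position of the least significant bit in which the labels of $u$ and $v$ differ. Then the $r-1$ low bits agree, and bit $r$ is $0$ in exactly one of them: if it is $0$ in $u$ then $e\in A_r$, else $e\in B_r$. Since distinct vertices have distinct labels in $\{0,\dots,n-1\}$, such an $r$ with $0\le r < \lceil\log n\rceil$ always exists, so $e$ is covered. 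This yields $\sum_{v}\sum_{e\in H(v)} w(e) \le \sum_{S} w(S)$ over the $2\log n+1$ sets $S$, hence $\max_S w(S) \ge \tfrac{1}{2\log n+1}\sum_v\sum_{e\in H(v)}w(e) \ge \tfrac{1}{2\log n+1}\cdot OPT$... wait — more carefully, $OPT \le \sum_v\sum_{e\in H(v)} w(e) \le (2\log n+1)\max_S w(S)$, giving ratio $2\log n+1$; the stated $2+2\log n$ absorbs the ceiling in $k=\lceil\log n\rceil$ and the $T$ set, so I would just track the index count honestly.

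The main obstacle I anticipate is the bookkeeping around indices and the base of the logarithm: ensuring $k=\lceil\log_2 n\rceil$ bits genuinely suffice to separate all $n$ labels (it does, but the excerpt writes $k=\log n$ loosely), and confirming that edges with $e\in H(u)\cap H(v)$ are correctly quarantined into $T$ rather than leaking into some $A_r$. A secondary subtlety worth stating explicitly is why the head-endpoint of each edge in $A_r$ is enough for PDBEP feasibility — this is exactly the degree condition's disjunctive form: for $e=(u,v)$ it suffices that $d'_v \le c_v$, which holds because all $A_r$-edges at $v$ are incoming and incoming edges lie in $H(v)$, of which there are only $c_v$. Once these points are pinned down, the arithmetic combining the covering inequality with Lemma~\ref{lem1}'s weighted analogue is routine.
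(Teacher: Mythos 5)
Your overall plan matches the paper's proof (feasibility of each of $T,A_r,B_r$, covering of $\bigcup_v H(v)$, and the averaging argument with the factor $2$ for the double-counted $T$-edges), but there is one genuine gap, and it is exactly the point you wave away. You assert that ``the partitioning by bits of the labels is not actually needed for feasibility'' and that bounding the number of \emph{incoming} edges at a head vertex $v$ by $c_v$ suffices. It does not: the degree $d'_v$ in the candidate solution counts \emph{all} edges of that set incident on $v$, including edges for which $v$ is a tail. The reason $A_r$ is feasible is precisely the bipartite structure created by the bit-$r$ split: every $A_r$-edge goes from a vertex whose $r$-th bit is $0$ to a vertex whose $r$-th bit is $1$, so a head vertex of $A_r$ can never be a tail of another $A_r$-edge; hence all of its incident $A_r$-edges lie in $H(v)$ and $d'_v\le c_v$. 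Without this split the argument collapses: $E_2$ as a whole can be infeasible. For instance, take vertices $u,v,a,b,p$ with edges $(u,p),(u,v),(u,a),(v,b)$ of weights $100,10,9,5$ and all degree bounds $1$; then $H(u)=\{(u,p)\}$, $H(v)=\{(u,v)\}$, $H(a)=\{(u,a)\}$, $H(b)=\{(v,b)\}$, so $E_2=\{(u,v),(u,a),(v,b)\}$, in which both endpoints of $(u,v)$ have degree $2>1$, violating the degree condition. So the ``head-endpoint argument'' needs the bit-splitting to even be stated correctly; this is the paper's ``bipartite graph where all arrows have heads in one set'' observation, and your proof must include it.

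The remaining parts are essentially the paper's argument: $T$ is feasible because both endpoints of each $T$-edge satisfy their bounds, every $E_2$-edge lands in some $A_r$ or $B_r$ via the least significant differing bit of the two labels, and $\bigcup_v H(v)=E_1=T\cup\bigcup_r A_r\cup\bigcup_r B_r$. One accounting slip to fix: the inequality $\sum_v\sum_{e\in H(v)}w(e)\le\sum_S w(S)$ is not what you get, since $T$-edges are counted twice on the left; the correct chain is $OPT\le\sum_v w(H(v))\le 2\,w(T)+\sum_r\bigl(w(A_r)+w(B_r)\bigr)\le(2+2\log n)\max_S w(S)$, which is where the stated factor $2+2\log n$ comes from (not from ``absorbing the ceiling'').
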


\begin{proof}
Set $T$ constitutes a feasible solution since both ends of each edge in it satisfy
the degree constraint. In $A_r$ all arrows are pointed from $u$ with $r$-th bit zero
to $v$ with $r$-th bit one. Hence it is a bipartite graph where all arrows have heads in one set
and the tails in the other. All vertices on the head side satisfy the degree conditions 
because all their incident edges are in their heavy sets. Therefore $A_r$ are feasible 
solutions. Similarly all $B_r$ are also feasible.
It is easy to see that every edge of $E_2$ belongs to $A_r$ or $B_r$ for some $r$.
Hence $T\cup (\cup_r A_r) \cup (\cup_r B_r) = E_1$. 
Observe that $\cup_{v}H(v) = E_1$. Only $T$-edges have both ends in heavy sets. Using the fact that
$OPT \leq \sum_v w(H(v))$, we deduce that $OPT \leq 2w(T) + \sum_r (w(A_r)+w(B_r))$. So the weight of the set
output in step 5 is at least $OPT/(2 + 2\log n)$. $\Box$
\end{proof}

\section{Exact Algorithm}
In this section we give a polynomial time exact algorithm for the unweighted PDBEP problem for the 
special case when the input graph is a tree. We will denote the degree of a vertex $v$ in the input graph
by $d(v)$ and its degree in a solution under consideration by $d'(v)$.

Let $T$ be a rooted tree with root $R$. For any vertex $v$ we denote the subtree rooted at
$v$ by $T(v)$. Consider all feasible solutions of PDBEP of graph $T(v)$ in which degree of $v$
is at most $c_v-1$, call them $H$-solutions. Let $h(v)$ be the number of edges in the largest such solution.
Similarly let $g(v)$ be the optimal $G$-solution in which the degree of $v$ is restricted to be equal to
$c_v$. Lastly $b(v)$ will denote the  optimal $B$-solution which are solutions of $T(v)$ under the 
restriction that degree of $v$ be at least $c_v$ and every
neighbor of $v$ in the solution satisfies the degree condition.  It may be observed that one class of
solutions of $T(v)$ are included in $G$-solutions as well as in $B$-solutions. These are the solutions
in which $d'(v)=c_v$ and every neighbor $u$ of $v$ in the solution has $d'(u)\leq c_u$.
If in any of these cases there are no feasible solutions, then the corresponding optimal value
is assumed to be zero. Hence the optimum solution of PDBEP for $T$ is the maximum of $h(R), g(R)$, 
and $b(R)$ and all three values are zero for a leaf nodes. 

Let $Ch(v)$ denote the set of child-nodes of $v$ in $T(v)$. We partition $Ch(v)$ into  
$H(v)=\{u\in Ch(v)|h(u)\geq \max\{g(u),b(u)\}\}$, 
$G(v)=\{u\in Ch(v)|g(u)>\max\{h(u),b(u)\}\}$, 
$B(v)=Ch(V)\setminus (G(v) \cup H(v))$.
While constructing a $G$-solution of $T(v)$ from the solutions of the children of $v$
we can include the edge $(v,u)$ for any vertex
$u$ in $H(v)\cup B(v)$ along with the optimal solution of $T(u)$ without
disturbing the degree conditions of the edges in this solution.
But we can add edge $(v,u)$ to the solution, for any $u\in G(v)$, only by selecting a $B$-solution
or an $H$-solution of $T(u)$ because if we use a $G$-solution for $T(u)$, then vertex $u$ which 
was earlier satisfying the degree condition, will now have degree $c_u+1$. Similarly while
constructing a $B$-solution of $T(v)$ we can connect $v$ to any number of $H(v)$ vertices and use their
optimal $H$-solutions. But in order to connect $v$ with $u\in B(v)\cup G(v)$ we must use the optimal
$H$-solution of $T(u)$.

If $k=c_v-|H(v)|-|B(v)|>0$, then we define $S'(v)$ to be the set of $k$ members of $G(v)$ with
smallest values of $g(u)-\max\{h(u),b(u)\}$. Otherwise $S'(v) = \emptyset$.
Similarly if $k=c_v-|H(v)|>0$, then we define $S''(v)$ to be the set of $k$ members of $G(v)\cup B(v)$ with
smallest key values where key is $g(u)- h(u)$ for $u\in G(v)$ and $b(u)-h(u)$ for $u\in B(v)$. Otherwise 
$S''(v) = \emptyset$. Now we have following lemma which leads to a simple dynamic program for PDBEP.

\begin{lemma} For any internal vertex $v$ of $T$,\\

(i) $h(v) = \sum_{u\in B(v)} b(u) + \sum_{u\in H(v)}h(u) + \sum_{u\in G(v)}g(u)
 + \min\{c_v-1,|H(v)|+|B(v)|\}$,

If $d(v)=c_v$ and $v\neq R$, then set $b(v)=g(v)=0$ otherwise

(ii) $g(v) = \sum_{u\in B(v)} b(u) + \sum_{u\in H(v)}h(u) + \sum_{u\in G(v)\setminus S'(v)}g(u)+ \\
\sum_{u\in S'(v)} \max\{h(u),b(u)\} + c_v$,

(iii) $b(v) =  \sum_{u\in H(v)\cup S''(v)}h(u) +  \sum_{u\in B(v)\setminus S''(v)} b(u) + \sum_{u\in G(v)
\setminus S''(v)}g(u) + \max\{c_v, |H(v)|\}$,
\end{lemma}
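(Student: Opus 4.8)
The plan is to fix an internal vertex $v$ and proceed by structural induction, assuming that $h(u)$, $g(u)$, $b(u)$ are already the correct optimal values for every child $u\in Ch(v)$ (the base case, leaves, being trivial). The key observation is that in the rooted tree the only neighbours of $v$ inside $T(v)$ are its children, so the edge set of $T(v)$ is partitioned into the edges $(v,u)$, $u\in Ch(v)$, and the edge sets $E(T(u))$. Hence every solution $S$ of $T(v)$ decomposes uniquely as $S=\{(v,u):u\in P\}\cup\bigcup_{u\in Ch(v)}S_u$ with $P\subseteq Ch(v)$ and $S_u\subseteq E(T(u))$, and in this decomposition $d'(v)=|P|$, the degree of a child $u$ is $d_{S_u}(u)+1$ if $u\in P$ and $d_{S_u}(u)$ otherwise, while the degree of any vertex of $T(u)\setminus\{u\}$ is its degree in $S_u$. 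I would first translate feasibility of $S$ into local conditions: each $S_u$ with $u\notin P$ is a feasible solution of $T(u)$; for $u\in P$ the edge $(v,u)$ is covered, i.e.\ $d'(v)\le c_v$ or $d_{S_u}(u)+1\le c_u$; and whenever a vertex of $T(v)$ has degree exceeding its bound, all its incident solution edges have their other endpoint within its bound.

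The technical heart is a per-child contribution lemma. If $u\notin P$, the best choice of $S_u$ is the largest feasible solution of $T(u)$, and a short check (splitting on $d_{S_u}(u)<c_u$, $=c_u$, $>c_u$) shows its size is $\max\{h(u),g(u),b(u)\}$, which by construction of the partition equals $h(u)$, $g(u)$, $b(u)$ according as $u\in H(v)$, $G(v)$, $B(v)$. If $u\in P$ and $v$ is covered in $S$ --- automatic for an $H$-solution, where $d'(v)\le c_v-1$, and for a $G$-solution, where $d'(v)=c_v$ --- then $(v,u)$ imposes nothing, and the admissible $S_u$ are exactly the $H$-solutions of $T(u)$ (when $d_{S_u}(u)+1\le c_u$) together with the $B$-solutions of $T(u)$ (when $d_{S_u}(u)+1>c_u$), so $u$ contributes $\max\{h(u),b(u)\}+1$. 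If $u\in P$ in a $B$-solution, whose definition directly requires every matched neighbour $u$ of $v$ to have $d'(u)\le c_u$, then $d_{S_u}(u)+1\le c_u$, so $S_u$ must be an $H$-solution and $u$ contributes $h(u)+1$. I would pay particular attention here to why a ``$G$-type'' solution of $T(u)$ can never be reused once $(v,u)$ is added: the extra edge pushes $d'(u)$ above $c_u$, which would expose an uncovered neighbour of $u$.

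With the contribution values in hand, each of (i)--(iii) is an optimisation over $P$, settled by a one-line exchange argument. For (i) we have $|P|\le c_v-1$ and inserting a child $u$ into $P$ changes the value by $1$ if $u\in H(v)\cup B(v)$ and by $\max\{h(u),b(u)\}+1-g(u)\le 0$ if $u\in G(v)$ (using $g(u)>\max\{h(u),b(u)\}$ and integrality); an optimal $P$ therefore consists of $\min\{c_v-1,|H(v)|+|B(v)|\}$ vertices of $H(v)\cup B(v)$ and none of $G(v)$, which after collecting terms is exactly (i). For (ii) we need $|P|=c_v$; the same accounting says to fill $P$ from $H(v)\cup B(v)$ first and, only if they do not suffice, to add the $c_v-|H(v)|-|B(v)|$ vertices of $G(v)$ minimising $g(u)-\max\{h(u),b(u)\}$, which is precisely $S'(v)$, giving (ii). For (iii) we need $|P|\ge c_v$ and a matched child contributes $h(u)+1$, so its net gain is $1$ for $u\in H(v)$ and $1-(\max\{h(u),g(u),b(u)\}-h(u))\le 0$ for $u\in G(v)\cup B(v)$; the optimum keeps all of $H(v)$ and, if $|H(v)|<c_v$, adjoins the $c_v-|H(v)|$ vertices of $G(v)\cup B(v)$ of smallest key ($g(u)-h(u)$ on $G(v)$, $b(u)-h(u)$ on $B(v)$), i.e.\ $S''(v)$, yielding (iii). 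Finally, the case $d(v)=c_v$, $v\neq R$ is exactly when $|Ch(v)|=c_v-1$, so no solution of $T(v)$ can achieve $d'(v)\ge c_v$ and hence $g(v)=b(v)=0$; in every other case $|Ch(v)|\ge c_v$ (for $v=R$ because $|Ch(R)|=d(R)\ge c_R$), so the sets $P$ used above are realisable.

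The step I expect to be the main obstacle is the per-child contribution lemma: pinning down which feasible solutions of a child subtree $T(u)$ survive the addition of $(v,u)$ in each of the $H$-, $G$-, and $B$-scenarios for $v$, and checking that they are summarised without loss by $h(u)$, $g(u)$, $b(u)$ --- in particular the disappearance of $g(u)$ for matched children. Once that case analysis is fixed, the optimisation over $P$ in (i)--(iii) is a routine exchange argument and the passage to the stated closed forms is purely algebraic.
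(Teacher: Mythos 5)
Your proof is correct and takes essentially the same route as the paper, which only sketches this argument informally: decompose a solution of $T(v)$ according to the set of children joined to $v$, note that a joined child can only contribute an $H$- or $B$-solution of its subtree (only an $H$-solution when building a $B$-solution at $v$), and choose which children to join by the exchange/greedy rule encoded in $S'(v)$ and $S''(v)$. The only difference is that you spell out the induction, the per-child feasibility case analysis, and the realisability check $|Ch(v)|\ge c_v$ behind the clause $d(v)=c_v$, $v\neq R$, all of which the paper leaves implicit.
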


Observe that if $h(u)$ is equal to $b(u)$ or $g(u)$, then $u$ is categorized as an $H(v)$ vertex
and if $b(u)=g(u)>h(u)$, then $u$ is assigned to $B(v)$ set. Hence the last term is maximum in 
each of the cases in the lemma.

The algorithm initializes $h(v),b(v)$, and $g(v)$ to zero for the leaf nodes and computes these
values for the internal nodes bottom up. Finally it outputs the maximum of the three values of the root $R$.
Computations for any internal vertex takes $O(|Ch|\log |Ch|)$ time where $Ch$ is the set of children
of that vertex. Besides the ordering the vertices so that child occurs before the parent (topological sort)
takes $O(n)$ time. Hence the time complexity is $O(n\log n)$.

\section{Future work}
It remains an open question if there exists a constant factor approximation algorithm for the PDBEP problem when the input graph is weighted. The objective function of LP1 can be easily modified to handle the weighted case but due to the large integrality gap this approach remains useless. However, there are cutting-plane methods like Chv{\'a}tal-Gomory cuts \cite{ST} that have been known to improve the integrality gaps for some problems. It would be worthwhile to see if these methods can help reduce the integrality gap of LP1.

There seems no reason to assume that the number of resources necessary for the accomplishment of a job in the resource allocation problem cannot exceed two. Hence a natural generalization of the PDBEP problem to hypergraphs.

Any NP-hard problem is not considered resolved unless and until it has an algorithm with approximation factor that matches the lower bound for that problem. As far as we know there is no known inapproximability result for the PDBEP problem. So that presents another avenue for further research.

\bibliographystyle{plain}
\bibliography{pdbep}

\end{document}